\newcommand{\calC}{\mathcal{C}}
\newcommand{\F}{\mathbb{F}}
\newtheorem*{thm*}{Theorem}
\newtheorem{thm}{Theorem}[section]
\newtheorem{lem}[thm]{Lemma}
\newtheorem{con}[thm]{Conjecture}
\newtheorem{coro}[thm]{Corollary}
\newtheorem{rmk}[thm]{Remark}
\begin{document}

\title{Asymptotic performance of metacyclic codes}
\author{Martino Borello}
\thanks{M. Borello is with LAGA,  UMR 7539, CNRS, Universit\'e Paris 13 - Sorbonne Paris
Cit\'e, Universit\'e Paris 8, F-93526, Saint-Denis, France}
\author{Pieter Moree}
\thanks{P. Moree is with Max-Planck-Institut f\"ur Mathematik, Vivatsgasse 7, D-53111 Bonn,
Germany.}
\author{Patrick Sol\'e}
\thanks{P. Sol\'e is with Aix Marseille University, CNRS, Centrale Marseille, I2M, Marseille, France}

\date{}
\maketitle

\begin{abstract} 
A finite group with a cyclic normal subgroup $N$ such
that $G/N$ is cyclic is said to be metacyclic. 
A code over a finite field $\F$ is a metacyclic code if it is a left ideal in the group algebra $\F G$ for $G$ a metacyclic group. Metacyclic codes are generalizations of dihedral
codes, and can be constructed as quasi-cyclic codes with an extra
automorphism. In this paper, we prove that metacyclic codes form an
asymptotically good family of codes. Our proof relies on a version of
Artin's conjecture for primitive roots in arithmetic progression being true
under the Generalized Riemann Hypothesis (GRH).
\end{abstract}

\bigskip

\noindent {\bf Keywords.} Group code, quasi-cyclic code, metacyclic group, asymptotically good code\\ 
{\bf MSC(2010).} 94A17, 94B05, 20C05

\section{Introduction}
Metacyclic codes were studied intensively by Sabin in the 1990's
\cite{S,SL}. They are (left) ideals in the group ring $\F G(m,s,r),$
where $\F$ is a finite field and $G(m,s,r)$ is the finite group of order
$ms$ defined as
$$G(m,s,r)=\langle x,y \mid x^m=1,\,y^s=1,\, yx=x^ry\rangle,$$
with $r^s\equiv 1\,({\rm mod~}m)$. More recently, their concatenated
structure was explored in \cite{CCFG}.

In the present paper, we show that for some values of the parameters
$m,s,r$ (in particular $s>1$ fixed, $m$ a prime and $r\neq 1\,({\rm mod~}m)$ depending on
$m$), these codes are asymptotically good. This extends results of
Bazzi-Mitter, who dealt with $\F=\F_2$ and $G(m,2,m-1),$ a dihedral
group \cite{BM}, and, partially, Borello-Willems, who considered the
case $\F=\F_s,$ with both $m$ and $s$ prime. As observed in
\cite[\S4]{BCW}, applying field extensions as in \cite[Proposition
12]{FW}, the result of Bazzi-Mitter can be extended to any field of
characteristic $2$ and that of Borello-Willems to any field
of characteristic $m$. In our case, the characteristic of the field
is not necessarily related to the cardinality of the group, so that
we have more freedom in our choice of the alphabet. Moreover, the
proof is conceptually simpler. On the other hand, our results rely on
a variant of Artin's primitive conjecture (Conjecture \ref{generalArtin}) being true, where
the primes are supposed to lie in a progression of the form 
$1\,({\rm mod~}s).$ This is currently only guaranteed on
assuming the GRH.

The main idea is to realize metacyclic codes as quasi-cyclic codes
with some extra automorphism. Such codes can be enumerated by the Chinese Remainder Theorem (CRT) approach of \cite{LS1}. This technique regards a
quasi-cyclic code of index $\ell$ as a code of length $\ell$ over an
auxiliary ring. Decomposing the said ring
into a direct sum of extension fields by the CRT for polynomials
yields a decomposition into codes of length $\ell$ over these fields. These codes are called constituent codes. The
favorable case where there are only two constituents requires, to be realized
infinitely many times, to invoke  Artin's conjecture. An
expurgated random coding argument, similar to the one that proves
that double circulant codes are asymptotically good \cite{A+}, can
then be applied.

The material is arranged as follows. The next section collects the basic notions and notations needed in the rest of the paper. Section \ref{main-result} derives the main result. Section \ref{conclusion} concludes the article.
\section{Definitions and Notation}
\subsection{Quasi-cyclic Codes}
A linear code $\calC$ over the finite field $\F$ is said to be {\bf
quasi-cyclic} of index $\ell$, or $\ell$-QC for short, if it is left
wholly invariant under the $\ell$'th power of the shift. Assume, for
convenience, that the length $n$ of $\calC$ is $n=\ell m,$ for some
integer $m$ called the co-index. As is well-known \cite{LS1}, such a
code is an $R_m$-submodule of $R_m^\ell,$ where $R_m$ denotes the
ring $R_m=\F[x]/(x^m-1).$ A related class of codes is that of {\bf
$\ell$-circulant} codes, consisting of linear codes of length
$n=\ell m$ and dimension $m,$ whose generator matrix is made of
circulant blocks of size $m$. Such
codes are coordinate permutation equivalent to $\ell$-QC codes.
Recall that there is ring isomorphism between circulant matrices of order $m$
and $R_m$ given by $A \mapsto A_{11}+A_{12}x+\cdots A_{1m}x^{m-1}.$
Thus, for example, the binary matrix $$\begin{pmatrix} 1 & 0 & 0 & 0 & 1 & 1\\
                                         0 & 1 & 0 & 1 & 0 & 1\\
                                         0 & 0 &1 & 1 & 1 & 0
 \end{pmatrix}
$$ is encoded by that isomorphism as $(1,x+x^2).$
\subsection{Metacyclic groups}
Let $G(m,s,r)$ denote the group of order $ms$ defined by generators
and relations as
$$G(m,s,r)=\langle x,y \mid x^m=1,\,y^s=1,\, yx=x^ry\rangle,  $$
where $r$ satisfies $r^s\equiv 1 \pmod*{m}.$ Such a group is called
{\bf metacyclic}, since it has a cyclic normal subgroup $N=\langle
x\rangle$ such that the quotient group $G/N$ is also cyclic. When
$r=m-1$ and $s=2,$ we obtain the {\bf dihedral} group $D_m$ of order
$2m$,
$$D_m=\langle x,y \mid x^m=1,\,y^2=1,\, yx=x^{-1}y\rangle,  $$
while the case $r=1$ reduces to the abelian group
$$C_m\times C_s=\langle x,y \mid x^m=1,\,y^s=1,\, yx=xy\rangle. $$
Here $C_i$ denotes the cyclic group of order $i.$
\subsection{Group codes}
Let $G$ be a finite group of order $n$. A {\bf $G$-code} (or a {\bf
group code}) $\calC$ over a finite field $\F$ is a left ideal in the
group algebra $\F G=\{\sum_{g\in G}a_gg\mid a_g\in \F\}$. Once 
we choose an ordering of $G$, we have a $\F$-linear isomorphism
$\varphi:\sum_{g\in G}a_gg\mapsto(a_g)_{g\in G}$ between $\F G$ and
$\F^n$, and the image of $\calC$ is a linear code in $\F^n$.
Changing the ordering gives coordinate permutation equivalent codes.
The group of permutation automorphism of $\varphi(\calC)$ contains a
transitive subgroup isomorphic to $G$. It is common practice to
identify $\calC$ and $\varphi(\calC)$. A {\bf metacyclic code} is a
$G$-code for $G=G(m,s,r)$ or equivalently a linear code of length
$ms$ whose permutation automorphism group contains a transitive
subgroup isomorphic to $G(m,s,r)$.
\subsection{The Artin primitive root conjecture for primes in 
arithmetic progression}\label{Artin}
Emil Artin conjectured in 1927 that given a non-zero integer $a$
that is not a perfect square nor $-1$, there are infinitely many
primes $m$ such that $a$ is primitive modulo $m.$ Recall that the Generalized Riemann Hypothesis (GRH) states that the analogue of Riemann hypothesis for zeta functions of number fields \cite{D}, the so 
called Dedekind zeta functions, holds true. 
A quantitative version of Artin's primitive root
conjecture is
proved under GRH by Hooley \cite{H}, and unconditionally for all but
two unspecified prime roots $a$ by Heath-Brown \cite{He}. 
The following is a refinement of Artin's primitive root conjecture
where in addition the prime $m$ is required to 
be in a fixed arithmetic progression 
$1\,({\rm mod~}s).$ 
\begin{con} 
\label{generalArtin}
Let $a$ be a non-zero integer that is not a perfect square nor $-1$.
Let $h$ be the largest integer such that $a$ is an $h$-th power. 
Let
$\Delta$ denote the discriminant of $\mathbb Q(\sqrt{a})$.
Given $s\ge 1,$ let $S(a,s)$ be the set of primes $m\equiv
1\,({\rm mod~}s)$ such that $a$ is a primitive root modulo $m.$ 
If both $(s,h)=1$ and $\Delta\nmid s,$ then
the set $S(a,s)$ is infinite.
\end{con}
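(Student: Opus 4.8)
\medskip
\noindent\textbf{Proof plan (under GRH).}
The plan is to prove the quantitative form: under GRH,
\[
N_a(x;s):=\#\{p\le x:\ p\equiv 1\,({\rm mod~}s),\ a\text{ is a primitive root mod }p\}=\delta(a,s)\,\mathrm{li}(x)+o\!\left(\frac{x}{\log x}\right)
\]
for an explicit constant $\delta(a,s)\ge 0$, and then to verify that $\delta(a,s)>0$ under the two hypotheses; the infinitude of $S(a,s)$ follows at once. A prime $p\nmid a$ is counted by $N_a(x;s)$ exactly when $p$ splits completely in $\mathbb Q(\zeta_s)$ but splits completely in none of the fields $\mathbb Q(\zeta_q,a^{1/q})$ with $q$ prime. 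The Möbius inclusion--exclusion underlying Hooley's treatment \cite{H} of the classical conjecture, carried over to the arithmetic progression $p\equiv 1\,({\rm mod~}s)$ (an extension going back to H.~W.~Lenstra and refined by Moree), then gives formally
\[
N_a(x;s)=\sum_{\substack{n\ge 1\\ n\ \text{squarefree}}}\mu(n)\,\pi_{L_n}(x),\qquad L_n:=\mathbb Q\!\left(\zeta_{{\rm lcm}(n,s)},\,a^{1/n}\right),
\]
where $\pi_L(x)$ denotes the number of primes $p\le x$ splitting completely in $L$; the least common multiple appears because $p\equiv 1\,({\rm mod~}n)$ and $p\equiv 1\,({\rm mod~}s)$ must hold simultaneously.

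For the analytic part I would follow Hooley's argument essentially verbatim. Writing $\xi_1=\tfrac16\log x$, $\xi_2=\sqrt x/\log^2 x$ and $\xi_3=\sqrt x\log x$, one invokes the effective Chebotarev density theorem under GRH (Lagarias--Odlyzko), in the shape $\pi_{L_n}(x)=\mathrm{li}(x)/[L_n:\mathbb Q]+O(\sqrt x\,\log(n\,|a|\,x))$, both for the truncated main sum over squarefree $n\le\xi_1$ and for the primes $q\in(\xi_1,\xi_2]$; the range $q\in(\xi_2,\xi_3]$ is disposed of by the Brun--Titchmarsh inequality applied to $p\equiv 1\,({\rm mod~}q)$, and the range $q>\xi_3$ by the elementary fact that if $q\mid p-1$ with $q>\xi_3$ and $a$ is a $q$-th power modulo $p$, then the order of $a$ modulo $p$ is at most $\sqrt x/\log x$, so there are $o(x/\log x)$ such $p$. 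One also needs the elementary degree estimate $[L_n:\mathbb Q]\asymp n\,\varphi({\rm lcm}(n,s))$. The side condition $p\equiv 1\,({\rm mod~}s)$ never hurts, as it only shrinks every set in sight. This yields the displayed asymptotic with $\delta(a,s)=\sum_n\mu(n)/[L_n:\mathbb Q]$.

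The real work --- the step I expect to be the main obstacle --- is to prove $\delta(a,s)>0$. Both hypotheses are plainly necessary, since when either fails $S(a,s)$ is empty for a trivial reason: if a prime $q$ divides $(s,h)$, then $q\mid p-1$ for every $p$ counted while $a$, being a perfect $q$-th power of an integer, is automatically a $q$-th power residue modulo $p$, so $a$ is never primitive; and if $\Delta\mid s$, then $\mathbb Q(\sqrt a)=\mathbb Q(\sqrt\Delta)\subseteq\mathbb Q(\zeta_s)$, so every $p\equiv 1\,({\rm mod~}s)$ makes $a$ a quadratic residue modulo $p$, again preventing primitivity. The substance of the density computation is that these are the \emph{only} obstructions. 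For this one needs the exact value of $[L_n:\mathbb Q]$: it equals $\varphi({\rm lcm}(n,s))\cdot n$ divided by a bounded factor recording (i) whether $a$ is a proper power, through $\gcd(n,h)$, and (ii) whether $\mathbb Q(\sqrt\Delta)$ already lies inside $\mathbb Q(\zeta_{{\rm lcm}(n,s)})$, which happens precisely when $\Delta\mid{\rm lcm}(n,s)$. Substituting this and re-summing the (only conditionally convergent) series, $\delta(a,s)$ takes the form of an absolutely convergent Artin-type Euler product $\prod_q\bigl(1-\tfrac1{q(q-1)}\bigr)$ in which the finitely many local factors at the primes dividing $s$, $h$ and $\Delta$ are replaced by explicit modified rational factors (the ones at primes dividing $s$ also carrying the $1/\varphi(s)$-density of the constraint $p\equiv 1\,({\rm mod~}s)$), multiplied by a single further rational correction factor attached to $\mathbb Q(\sqrt\Delta)$ --- the arithmetic-progression analogue of the ``$a\equiv 1\,({\rm mod~}4)$'' correction in the classical conjecture. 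Under $(s,h)=1$ the modified local factors are all strictly positive, and under $\Delta\nmid s$ the quadratic correction factor is a finite product, over the primes dividing $\Delta$ (and $s$, $h$), of terms of shape $1-\tfrac1{q-2}$ or $1-\tfrac1{q^2-q-1}$; showing that this product never vanishes --- in particular in the delicate cases with $2\mid s$ and small primes dividing $\Delta$ --- is the heart of the matter and the one place where the argument requires genuine care.
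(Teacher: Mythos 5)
The paper does not itself prove this statement; it is stated as a conjecture, and the paper's only original contribution to it is the short discussion immediately following, which establishes the \emph{necessity} of the two side conditions by exactly the arguments you give (if $(s,h)>1$ then $a$ is automatically a $\gcd(s,h)$-th power residue modulo every $m\equiv 1\pmod{s}$ with $m\nmid a$; if $\Delta\mid s$ then $\mathbb{Q}(\sqrt a)\subseteq\mathbb{Q}(\zeta_s)$ forces $(a/m)=1$, capping the order of $a$ at $(m-1)/2$). For the positive direction the paper defers entirely to Lenstra~[L, Thm.~8.3] and Moree~[M, Thm.~4], packaged as Theorem~\ref{Artinprogression}. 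Your plan is therefore not a different route so much as a correct reconstruction of the content of those citations: the M\"obius inclusion--exclusion over the splitting fields $L_n=\mathbb{Q}(\zeta_{\mathrm{lcm}(n,s)},a^{1/n})$, Hooley's four-range decomposition with effective Chebotarev under GRH in the first two ranges and Brun--Titchmarsh plus an order-of-$a$ counting argument in the last two, and the re-summation of $\delta(a,s)=\sum_n\mu(n)/[L_n:\mathbb{Q}]$ into an Artin-type Euler product are precisely the ingredients of Hooley's proof as generalized by Lenstra and made explicit in Moree's density formula. Your outline is sound as far as it goes, and you are right that the genuine work is in the exact degree formula for $[L_n:\mathbb{Q}]$ (tracking $\gcd(n,h)$ and the quadratic entanglement $\mathbb{Q}(\sqrt\Delta)\subseteq\mathbb{Q}(\zeta_{\mathrm{lcm}(n,s)})$) and in verifying the non-vanishing of the resulting product under $(s,h)=1$ and $\Delta\nmid s$; that verification is exactly what Moree's Theorem~4 supplies and what the paper chooses to cite rather than redo.
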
 
If $(s,h)>1,$ then the set
$S(a,s)$ is finite. Suppose there exists an element $m$ of that set not dividing $a$. Writing $a=a_0^h,$ we have 
$a^{(m-1)/(s,h)}=a_0^{h(m-1)/(s,h)}\equiv a_0^{m-1}\equiv 1\,({\rm mod~}m)$ and
so $a$ is not primitive modulo $m.$ Contradiction.

Likewise, if $\Delta\mid s,$ the set $S(a,s)$ is finite.  
By elementary algebraic number theory
the smallest $m$ for which $\mathbb Q(\sqrt{g})\subseteq \mathbb Q(\zeta_k)$ equals
$k=|\Delta|.$ The primes $m\equiv 1\,({\rm mod~}s)$ split
completely in $\mathbb Q(\zeta_k)$ and so certainly in the subfield $\mathbb Q(\sqrt{a})$.
If $m\nmid a,$ it then follows that the Legendre symbol $(a/m)=1$ and so the order of $a$ modulo $m$
is at most $(m-1)/2,$ and so  $S(a,s)$ is finite.  

 The conjecture thus claims that if
there is no trivial reason for $S(a,s)$ to be finite, it is actually
infinite.

Under GRH Lenstra \cite[Theorem 8.3]{L} established a far reaching
generalization of Artin's original conjecture. In particular, his 
work implies the truth of Conjecture \ref{generalArtin}. 
Indeed, under GRH the set $S(a,s)$ has a
natural density that can be explicitly given, which was done by
Moree \cite[Theorem 4]{M}. Combination of the two results yields the following theorem.
\begin{thm} 
\label{Artinprogression}
Under GRH Conjecture  \ref{generalArtin} holds true and, moreover,  
the set $S(m,s)$ has an explicitly
determinable density that is a rational multiple times the
Artin constant.
\end{thm}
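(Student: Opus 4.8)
The plan is to derive this by combining Lenstra's conditional density theorem \cite[Theorem 8.3]{L} with the explicit evaluation of the resulting density carried out by Moree \cite[Theorem 4]{M}; I describe the structure of the argument below. The first step is to translate the two defining conditions on a prime $m$ — namely $m\equiv 1\,({\rm mod~}s)$ and $a$ being a primitive root modulo $m$ — into completely split conditions for $m$ in a family of number fields. The congruence $m\equiv 1\,({\rm mod~}s)$ says that $m$ splits completely in $\mathbb{Q}(\zeta_s)$. Primitivity of $a$ modulo $m$ is equivalent to: for every prime $q\mid m-1$, $a$ is not a $q$-th power modulo $m$; by Kummer theory this is the condition that $m$ does \emph{not} split completely in $\mathbb{Q}(\zeta_q,a^{1/q})$. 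Writing $K_n=\mathbb{Q}(\zeta_s,\zeta_n,a^{1/n})$ for squarefree $n$ (a Galois extension of $\mathbb{Q}$, since it contains all $n$-th roots of unity), the set $S(a,s)$ therefore coincides, up to the finitely many primes dividing $a$, with the set of primes $m$ splitting completely in $K_1=\mathbb{Q}(\zeta_s)$ but in none of the $K_q$ with $q$ prime.

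Next I would run the Hooley--Lenstra inclusion--exclusion. The indicator function of $S(a,s)$ on primes is $\sum_{n \text{ squarefree}}\mu(n)\,[m \text{ splits completely in }K_n]$. For each fixed $n$, the Chebotarev density theorem gives that the density of primes splitting completely in $K_n$ equals $1/[K_n:\mathbb{Q}]$; and under GRH one has an effective error term in Chebotarev of the shape $O\bigl(x^{1/2}\log(x\cdot\mathrm{disc}(K_n))\bigr)$, which is strong enough to justify exchanging the infinite sum with the limit, exactly as in Hooley's original argument \cite{H}. This yields that $S(a,s)$ has a natural density given by the convergent series
\[
\delta(a,s)\;=\;\sum_{n=1}^{\infty}\frac{\mu(n)}{[\,\mathbb{Q}(\zeta_s,\zeta_n,a^{1/n}):\mathbb{Q}\,]}.
\]

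It then remains to put this series into closed form, which is the content of \cite[Theorem 4]{M}. One computes $[\mathbb{Q}(\zeta_s,\zeta_n,a^{1/n}):\mathbb{Q}]=\varphi(\mathrm{lcm}(n,s))\cdot n/\varepsilon_n$, where the entanglement correction $\varepsilon_n\in\{1,2\}$ records whether the quadratic field $\mathbb{Q}(\sqrt{a})$ sits inside $\mathbb{Q}(\zeta_{\mathrm{lcm}(n,s)})$ — this is precisely where the largest power $h$ with $a=a_0^h$ and the discriminant $\Delta$ of $\mathbb{Q}(\sqrt a)$ enter, through the conditions $2\mid n$ and $\Delta\mid\mathrm{lcm}(n,s)$. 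Separating off the finitely many "bad" primes dividing $s$ (and those dividing $h$ or $\Delta$) and grouping the remaining terms into an Euler product, one finds that $\delta(a,s)$ equals a rational number depending only on $s$, $h$, $\Delta$ times the Artin constant $A=\prod_{p}\bigl(1-\tfrac{1}{p(p-1)}\bigr)$, and that this rational factor vanishes exactly when $(s,h)>1$ or $\Delta\mid s$. Positivity of $\delta(a,s)$ outside those two cases then gives the infinitude of $S(a,s)$ asserted in Conjecture \ref{generalArtin}.

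The main obstacle is not the analytic input — under GRH the interchange of the limit and the inclusion--exclusion sum is routine, following Hooley verbatim — but the algebraic bookkeeping of the entanglement factor $\varepsilon_n$ across all squarefree $n$ simultaneously: a single global quadratic entanglement between the cyclotomic and radical towers forces a genuinely non-multiplicative correction to the naive Euler product. Tracking this correction correctly, and verifying that it degenerates precisely at the two "trivial" obstructions $(s,h)>1$ and $\Delta\mid s$ already isolated above, is where the real work of \cite{M} lies, and the present theorem is obtained by combining that computation with Lenstra's framework.
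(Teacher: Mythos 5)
Your proposal takes the same route as the paper, which proves the theorem simply by citing Lenstra \cite[Theorem 8.3]{L} for the conditional infinitude and Moree \cite[Theorem 4]{M} for the explicit density; you correctly unpack the Hooley--Lenstra inclusion--exclusion and the entanglement bookkeeping that those two citations rest on. A minor slip in your degree formula -- $[K_n:\mathbb{Q}]$ should carry the factor $n/(n,h)$ rather than $n$, with $\varepsilon_n$ reserved for the quadratic entanglement of $\mathbb{Q}(\sqrt a)$ with the cyclotomic tower -- is immaterial here, since the computation is in any case delegated to \cite{M}.
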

The reader interested in more information regarding the Artin primitive
root conjecture and its many generalizations and applications is referred
to the survey \cite{Survey}.

\subsection{Asymptotics}\label{asymp}
If $\calC(n)$ is a family of codes with parameters $[n,k_n,d_n]$
over $\F_q$, the rate $R$ and relative distance $\delta$ are defined
as $$R=\limsup\limits_{n\rightarrow \infty}\frac{k_n}{n}\text{~and~}
\delta=\liminf\limits_{n\rightarrow \infty}\frac{d_n}{n},$$
respectively. When examining a family of codes, it is natural to ask
if this family is asymptotically good or bad in the following sense.
A family of code is {\bf asymptotically good} if $R\delta \neq 0.$

Recall the $q$-ary {\bf entropy function} defined for $0\leq t\leq\frac{q-1}{q}$ by
\begin{equation*}\label{den1}
H_q(t)=\begin{cases}
 \emph{ }0,  ~~~~~~~~~~~~~~~~~~~~~~~~~~~~~~~~~~~~~~~~~~~~~~~~~~~~~~~~~~~~~~~~{\rm{if}}~~ t=0,\\
   \emph{ }t{\rm{log}}_q(q-1)-t{\rm{log}}_q(t)-(1-t){\rm{log}}_q(1-t), ~~~~~~~~~~~{\rm{if}}~~0<t\leq\frac{q-1}{q}. \\
\end{cases}
\end{equation*}
This quantity is instrumental in the estimation of the volume of high-dimensional Hamming balls when the base field is $\mathbb{F}_q$.
The result we are using in this paper is that the volume of the Hamming ball of radius $tn$ is asymptotically equivalent, up to subexponential terms, to $q^{nH_q(t)}$, when
$0<t<1$, and $n$ goes to infinity \cite[Lemma 2.10.3]{HP}.
\section{Main result}\label{main-result}
 Let $s$ be an integer greater than $1$ and $T_{a_1,\ldots,a_{s-1}}$ denote the $s$-circulant code over $\F_q$ with generator matrix $(1,a_1(x),\ldots,a_{s-1}(x))$ with $a_i(x)\in
 R_m=\F_q[x]/(x^m-1).$
 Denote by $\mu_r$ the {\bf multiplier by $r$} in $R_m,$ defined for all $f(x)\in R_m$ by $\mu_r(f(x))=f(x^r).$ (Cf. \cite[\S 4.3]{HP}).
Note that
$$\F_qG(m,s,r)\simeq \F_q[x,y]/(x^m-1,y^s-1,x^ry-yx)$$
as a ring (we are just choosing a special ordering of the elements
of $G(m,s,r)$), and the right-hand side is isomorphic to $R_m^s$ as an
$R_m$-module via
$$f_1(x)+f_2(x)y+\ldots+f_s(x)y^{s-1}\mapsto(f_1(x),f_2(x),\ldots,f_s(x)).$$
A construction of metacyclic codes from quasi-cyclic codes similar
to the next lemma can be found in \cite[Theorem 1]{S}.
 \begin{lem} 
 \label{ideal} 
 If $a_1\mu_r(a_1)\ldots\mu^{s-1}_r(a_1)=1$ and $a_j=a_1\mu_r(a_1)...\mu_r^{j-1}(a_1)$ for all $j\in\{2,\ldots,s-1\}$, then $T_{a_1,\ldots,a_{s-1}}$
 is metacyclic for the group $G(m,s,r).$
 \end{lem}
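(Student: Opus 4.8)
The plan is to exhibit an explicit $R_m$-module isomorphism between $\F_q G(m,s,r)$ and $R_m^s$ under which the left ideal generated by a suitable element corresponds to the row space of the circulant generator matrix $(1,a_1(x),\ldots,a_{s-1}(x))$. Since a metacyclic code is by definition a left ideal in $\F_q G(m,s,r)$, it suffices to show that, under the hypotheses on the $a_j$, the code $T_{a_1,\ldots,a_{s-1}}$ is the image of such a left ideal.

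First I would fix the identification $\F_q G(m,s,r)\simeq R_m^s$ recalled just before the lemma, sending $f_1(x)+f_2(x)y+\cdots+f_s(x)y^{s-1}$ to $(f_1,\ldots,f_s)$, and observe that this is $R_m$-linear where $R_m=\F_q[x]$ acts by left multiplication by $x$. The key computation is to understand left multiplication by $y$ on $R_m^s$: using the relation $yx=x^r y$, one gets $y\cdot f(x)=f(x^r)\,y=\mu_r(f)\,y$, so that left multiplication by $y$ sends $(f_1,f_2,\ldots,f_s)$ to $(\mu_r(f_s),\mu_r(f_1),\ldots,\mu_r(f_{s-1}))$ (a "twisted shift", using $y^s=1$). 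Next I would consider the specific element $\gamma=1+a_1(x)y+a_2(x)y^2+\cdots+a_{s-1}(x)y^{s-1}$, corresponding to the first row $(1,a_1,\ldots,a_{s-1})$ of the generator matrix, and compute $y^j\gamma$ for $j=0,1,\ldots,s-1$ using the twisted-shift description. The hypotheses $a_j=a_1\mu_r(a_1)\cdots\mu_r^{j-1}(a_1)$ and $a_1\mu_r(a_1)\cdots\mu_r^{s-1}(a_1)=1$ are precisely what is needed to guarantee that each $y^j\gamma$ is again of the form (unit)$\cdot$(twisted-cyclic shift of the row $(1,a_1,\ldots,a_{s-1})$), so that the left ideal $R_m^s\cdot\gamma$ — spanned over $R_m$ by $\gamma,y\gamma,\ldots,y^{s-1}\gamma$ — has as its image exactly the row space of the circulant matrix, i.e. $T_{a_1,\ldots,a_{s-1}}$. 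One should also check that $R_m\gamma+R_m y\gamma+\cdots+R_m y^{s-1}\gamma$ is the whole left ideal generated by $\gamma$, which follows because $\F_q G(m,s,r)$ is generated as an $R_m$-module by $1,y,\ldots,y^{s-1}$.

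I expect the main obstacle to be bookkeeping the twist carefully: the shift is not the plain cyclic shift but is composed with $\mu_r$ at each step, so that $y^j\gamma$ involves $a_i$ together with iterated multiplier images $\mu_r^k(a_i)$, and one must verify that the stated formulas for $a_j$ collapse these expressions correctly and that the "wrap-around" term (coming from $y^s=1$) is handled by the normalization $a_1\mu_r(a_1)\cdots\mu_r^{s-1}(a_1)=1$. Once the twisted-shift action of $y$ is pinned down, the rest is a direct verification that the $R_m$-span of $\{y^j\gamma\}$ coincides with the circulant row space, and hence that $T_{a_1,\ldots,a_{s-1}}$ is a left ideal in $\F_q G(m,s,r)$, i.e. a metacyclic code for $G(m,s,r)$.
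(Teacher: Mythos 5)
Your proposal is correct and follows essentially the same path as the paper: identify $\F_q G(m,s,r)$ with $R_m^s$ as an $R_m$-module, observe that left multiplication by $y$ acts as the twisted shift $(f_1,\ldots,f_s)\mapsto(\mu_r(f_s),\mu_r(f_1),\ldots,\mu_r(f_{s-1}))$, and check that the stated conditions on the $a_j$ make $y$ carry the generating row $(1,a_1,\ldots,a_{s-1})$ to an $R_m$-multiple of itself. The paper's proof is the slightly more economical version of yours (it checks only $y\cdot\gamma\in T$, which already suffices since $T$ is $R_m$-stable, rather than computing each $y^j\gamma$), but the key computation and the conclusion are the same.
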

 \begin{proof}
  Writing an arbitrary codeword as $$f_1(x)+f_2(x)y+\ldots+f_s(x)y^{s-1} \in \F_q[x,y]/(x^m-1,y^s-1,x^ry-yx),$$
  we see that left multiplication by $y$ in that ring corresponds to the map
  $$(f_1(x),\ldots,f_s(x))\mapsto(\mu_r(f_s(x)),\mu_r(f_1(x)),\ldots,\mu_r(f_{s-1}(x)))$$
in $R_m^s$. For $T_{a_1,\ldots,a_{s-1}}$ to be a $G(m,s,r)$-code, it
is sufficient to check that it is stable under left multiplication by $y$
(every $s$-circulant code being clearly stable under left
multiplication by $x$). Reasoning on the generator of
$T_{a_1,\ldots,a_{s}}$ the above relation shows that
$(\mu_r(a_{s-1}(x)),1,\ldots,$ $\mu_r(a_{s-2}(x)))$ is proportional to
$(1,a_1(x),\ldots,a_{s-1}(x))$ by an element of $R_m.$
  Getting rid of that element between two equations yields the said relations on the elements $a_1(x),\ldots,a_{s-1}(x).$
 \end{proof}
 
 \begin{rmk}\label{rmk-abelian} {\rm
A natural question is under which conditions $T_{a_1,\ldots,a_{s-1}}$ is two-sided, since in this case the code would be abelian \cite{SL}. Reasoning in the same way as above on the right multiplication by $y$ and by $x$ we obtain that $T_{a_1,\ldots,a_{s-1}}$ is a right ideal in $\F_q[x,y]/(x^m-1,y^s-1,x^ry-yx)$ if and only if $a_{s-1}^s=1$, $a_j=a_{s-1}^{s-j}$ for all
$j\in\{1,\ldots,s-2\}$, and $a_j=a_j\cdot x^{jr-1}$ for all $j\in
\{1,\ldots,s-1\}$. 
The last condition is equivalent to $a_j$ being constant on the orbits of the $(jr-1)$-th power of the shift, and in the 
case $m$ is prime and $r\neq 1 \,({\rm mod~}m)$, it is easy to see that the set of $T_{a_1,\ldots,a_{s-1}}$ satisfying all above conditions is empty: actually $a_1=a_1\cdot x^{r-1}$ implies 
that $a_1=\lambda(1+\ldots+x^{m-1})$, with $\lambda\in \F_q$. But then $$a_1\mu_r(a_1)\ldots\mu^{s-1}_r(a_1)=\lambda^s(1+\ldots+x^{m-1})^s
=\lambda^sm^{s-1}(1+\ldots+x^{m-1})$$ (the last equality can be proven by induction on $m\geq 2$) cannot be equal to $1$.}
 \end{rmk}

 We now assume that $m$ is a prime, such that $q$ is primitive modulo $m$. Thus, by the theory of cyclotomic cosets \cite[\S 4.1]{HP}, we know that $x^m-1=(x-1)h(x),$ with $h$ irreducible over $\F_q[x].$

 \begin{lem} 
 \label{count}
 Assume that $s$ divides $m-1$ and that the order of $r$ modulo $m$ is $s.$
 The number $\Omega_{m,s}$ of the $s$-circulant codes with the properties in Lemma \ref{ideal} is 
 $$\Omega_{m,s}=s'\cdot \frac{q^{m-1}-1}{q^{\frac{m-1}{s}}-1},
 \text{\,\,with\,\,} s'=(s,q-1).$$
 \end{lem}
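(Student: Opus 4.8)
The goal is to count the $s$-circulant codes $T_{a_1,\dots,a_{s-1}}$ satisfying the constraints of Lemma~\ref{ideal}: namely $a_1\mu_r(a_1)\cdots\mu_r^{s-1}(a_1)=1$ together with $a_j=a_1\mu_r(a_1)\cdots\mu_r^{j-1}(a_1)$ for $j\in\{2,\dots,s-1\}$. Since the latter equations simply express $a_2,\dots,a_{s-1}$ as explicit functions of $a_1$, a code with these properties is completely determined by the choice of $a_1$, and the only real constraint is the \emph{norm-type equation}
$$N(a_1):=a_1\cdot\mu_r(a_1)\cdots\mu_r^{s-1}(a_1)=1$$
in the ring $R_m=\F_q[x]/(x^m-1)$. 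So $\Omega_{m,s}=\#\{a_1\in R_m : N(a_1)=1\}$, and the task is to compute the cardinality of this kernel-of-a-norm set.

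\emph{Step 1: decompose the ring.} Because $m$ is prime and $q$ is primitive modulo $m$, we have $x^m-1=(x-1)h(x)$ with $h$ irreducible of degree $m-1$, so by CRT $R_m\simeq \F_q\times \F_{q^{m-1}}$, where the first factor records the value at $x=1$ and the second is $K:=\F_q[x]/(h(x))$. The multiplier $\mu_r$ acts on each factor: on the $\F_q$-component it is the identity (since $1^r=1$), and on $K$ it is the $\F_q$-algebra automorphism sending the image $\zeta$ of $x$ to $\zeta^r$. Since $\zeta$ is a primitive $m$-th root of unity and $r$ has order $s$ modulo $m$, the automorphism $\mu_r$ has order exactly $s$ on $K$; thus $\mu_r$ generates a subgroup of order $s$ of $\mathrm{Gal}(K/\F_q)$ (which is cyclic of order $m-1$, and $s\mid m-1$ by hypothesis). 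The fixed field of $\langle\mu_r\rangle$ is the subfield $F:=\F_{q^{(m-1)/s}}$ of $K$.

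\emph{Step 2: count on each component.} An element $a_1\in R_m$ corresponds to a pair $(c,\alpha)\in\F_q\times K$, and $N(a_1)$ corresponds to $(c^s,\, \alpha\,\mu_r(\alpha)\cdots\mu_r^{s-1}(\alpha))$. On the $\F_q$-factor the equation is $c^s=1$; the number of solutions $c\in\F_q$ is $(s,q-1)=s'$. On the $K$-factor the map $\alpha\mapsto \alpha\,\mu_r(\alpha)\cdots\mu_r^{s-1}(\alpha)$ is precisely the field norm $\mathrm{N}_{K/F}\colon K^\times\to F^\times$ (relative to the degree-$s$ extension $K/F$ with Galois group generated by $\mu_r$), which is surjective with every fibre of size $|K^\times|/|F^\times|=(q^{m-1}-1)/(q^{(m-1)/s}-1)$; note that if $\alpha=0$ the product is $0\ne1$, so only $\alpha\in K^\times$ contribute, and the fibre over $1\in F^\times$ has exactly that size. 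Multiplying the two independent counts gives $\Omega_{m,s}=s'\cdot(q^{m-1}-1)/(q^{(m-1)/s}-1)$.

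\emph{Main obstacle.} The only genuinely non-formal point is the identification in Step~1–2 of the iterated product $\alpha\mu_r(\alpha)\cdots\mu_r^{s-1}(\alpha)$ with the field norm $\mathrm{N}_{K/F}(\alpha)$, i.e.\ verifying that $\mu_r$ restricted to $K$ has order exactly $s$ (so that $\langle\mu_r\rangle$ is genuinely the Galois group of $K$ over its fixed field $F$, and the fixed field indeed has degree $(m-1)/s$ over $\F_q$): this is where the hypotheses ``$s\mid m-1$'' and ``$\mathrm{ord}_m(r)=s$'' are used. Once that is in place, surjectivity of the norm and the equidistribution of its fibres over $F^\times$ are standard facts about finite fields (e.g.\ via Hilbert~90 or a direct counting argument), and the $\F_q$-factor contributes the elementary factor $(s,q-1)$. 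I would also remark that distinct choices of $a_1$ give distinct codes $T_{a_1,\dots,a_{s-1}}$ — or, if the intended count is of tuples rather than of codes, that this is immediate since the generator matrix $(1,a_1,\dots,a_{s-1})$ is recovered from $a_1$ — so that the count of valid $a_1$ is exactly $\Omega_{m,s}$.
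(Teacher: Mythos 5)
Your proof is correct and follows essentially the same route as the paper's: CRT splitting $R_m\simeq\F_q\oplus\F_{q^{m-1}}$, the observation that $a_2,\dots,a_{s-1}$ are determined by $a_1$, the factor $(s,q-1)$ from the $\F_q$-component, and the kernel-of-norm count on the $\F_{q^{m-1}}$-component (the paper writes this as the cyclic-group identity $\{z:z^{1+t+\cdots+t^{s-1}}=1\}=\{A^{t-1}:A\in\F_Q^\times\}$ with $t=q^{(m-1)/s}$, which is exactly your "norm is surjective with equal fibres" phrased via Hilbert~90). Your closing remark that distinct $a_1$ yield distinct codes is a small point the paper leaves implicit, but it is correct since the unique codeword with first coordinate $1$ recovers the tuple.
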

\begin{proof}
  The CRT for polynomials yields the ring decomposition $$R_m\simeq \F_q\oplus \F_Q,$$ with $Q=q^{m-1}.$

  Write $a_1=a_1' \oplus \alpha_1, \ldots, a_{s-1}=a_{s-1}' \oplus \alpha_{s-1},$ in this decomposition. We study the conditions on $a_1,\ldots,a_{s-1}$
  given in Lemma \ref{ideal}, in the light of this CRT decomposition.
  \begin{itemize}
   \item The conditions on $a_1',\ldots,a_{s-1}'$ are $a_1'^s=1,$ and $a_j'=a_1'^{j-1}$ for $j\in\{2,\ldots,s-1\}$. The first equation has $s'$ solutions, with
   $s'=(s,q-1)$ and the rest of $a_j$ is uniquely determined.
   \item Since, by hypothesis, the order of $\mu_r$ is $s,$ the action of $\mu_r$ on $\F_Q,$ by the characterization of the Galois group
   of $\F_Q$ is exponentiation by $t=q^{\frac{m-1}{s}}.$ The condition on $a_1$ implies
   $$\alpha_1\in \{ z \in \F_Q \mid z^{1+t+\ldots+t^{s-1}}=1\}= \{ A^{t-1} \mid A \in \F_Q^\times\}, $$
   a set of size $\frac{t^s-1}{t-1}.$
   The rest of the $\alpha_j$'s is uniquely determined.
  \end{itemize}
The result follows by multiplying these two independent counts together.

   \end{proof}
   
The next lemma shows that the codes of Lemma \ref{ideal} have ``small'' common intersection.
\begin{lem} 
 \label{cover} If $(f_1(x),\ldots,f_s(x)) \in R_m^s,$ with a Hamming weight $<m$, then there are at most
 $q$ codes $T_{a_1,\ldots,a_{s-1}}$ with the properties in Lemma \ref{ideal} such that $(f_1(x),\ldots,f_s(x)) \in T_{a_1,\ldots,a_{s-1}}.$
 \end{lem}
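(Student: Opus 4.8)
The plan is to carry out the entire argument inside the CRT decomposition $R_m\simeq\F_q\oplus\F_Q$, $Q=q^{m-1}$, of the proof of Lemma~\ref{count}. We may assume $(f_1(x),\ldots,f_s(x))\neq 0$, since the zero vector lies in every $T_{a_1,\ldots,a_{s-1}}$. First I would recall that a codeword of $T_{a_1,\ldots,a_{s-1}}$ has the form $(f(x),f(x)a_1(x),\ldots,f(x)a_{s-1}(x))$, so that $(f_1,\ldots,f_s)\in T_{a_1,\ldots,a_{s-1}}$ forces $f_1=f$ and $f_j=f_1a_{j-1}$ for $2\le j\le s$; in particular $f_2=f_1a_1$, and $f_1\neq 0$ because the vector is nonzero.

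The main point, and the step I expect to be the real obstacle, is to show that the $\F_Q$-component $\phi_1$ of $f_1$ is nonzero. Here I would use that, since $x^m-1=(x-1)h(x)$ with $h(x)=1+x+\cdots+x^{m-1}$, one has $xh(x)=h(x)$ in $R_m$, so the kernel of the projection $R_m\to\F_Q$ is the one-dimensional ideal $\{\lambda h(x):\lambda\in\F_q\}$, whose nonzero elements all have Hamming weight $m$. As $0<\mathrm{wt}(f_1)\le\mathrm{wt}(f_1,\ldots,f_s)<m$, the element $f_1$ cannot lie in this kernel, hence $\phi_1\neq 0$.

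Finally I would count. Writing $f_i=f_i'\oplus\phi_i$ and $a_1=a_1'\oplus\alpha_1$ in $\F_q\oplus\F_Q$, projecting $f_2=f_1a_1$ onto $\F_Q$ gives $\phi_2=\phi_1\alpha_1$, which pins down $\alpha_1=\phi_1^{-1}\phi_2$ uniquely since $\phi_1\neq 0$. Since $\mu_r$ is the identity on $\F_q=R_m/(x-1)$ (evaluation at $1$), the relation $a_1\mu_r(a_1)\cdots\mu_r^{s-1}(a_1)=1$ of Lemma~\ref{ideal} projects to $a_1'^s=1$, an equation with $(s,q-1)$ solutions in $\F_q$. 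As $a_1$ determines $a_2,\ldots,a_{s-1}$, hence the code $T_{a_1,\ldots,a_{s-1}}$, at most $(s,q-1)\le q-1<q$ such codes can contain $(f_1,\ldots,f_s)$ — slightly sharper than the asserted bound $q$.
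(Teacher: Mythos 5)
Your proof is correct and follows essentially the same route as the paper's: determine $a_1$ uniquely in the $\F_Q$-component (equivalently, modulo $h(x)$) from $f_2=f_1a_1$ and the fact that $f_1\not\equiv 0\pmod{h(x)}$ when $0<\mathrm{wt}(f_1)<m$, then bound the freedom in the $\F_q$-component (equivalently, modulo $x-1$). The only differences are cosmetic or minor: you phrase the argument in the CRT decomposition explicitly, you handle the zero-vector case separately (which the paper leaves implicit), and you invoke the constraint $a_1'^s=1$ to cut the count to $(s,q-1)\le q-1$, a small sharpening of the paper's bound of $q$, where the paper simply uses that $R_m/(x-1)\simeq\F_q$ has $q$ elements.
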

\begin{proof}
 Keep the notation of the proof of Lemma \ref{count}. Since $a_2,\ldots,a_{s-1}$ are uniquely determined  by $a_1,$ we focus on $a_1.$
 The Hamming weight condition implies that $f_1(x)  \neq 0 \,({\rm mod~}h(x))$ (otherwise $f_1$ would be a nonzero codeword of the repetion code of length $m$
 over $\F_q$).
  Then $a_1(x)$ is uniquely determined modulo $h(x)$ by the equation $f_2(x)\equiv f_1(x)a_1(x)\,({\rm mod~}h(x)).$ But modulo $x-1$ it can takes $q$ values.
   The result follows.
 \end{proof}

The following results are true under Artin's primitive root
conjecture Conjecture \ref{generalArtin} for the progression 
$1\,({\rm mod~}s),$ which by Theorem \ref{Artinprogression} is 
guaranteed if GRH holds true.
\begin{thm} 
\label{main} Assume Conjecture \ref{generalArtin} holds true. Let $q$ be a prime and $s>1$ be an integer such that $q\nmid s$ if $q\equiv 1\,({\rm mod~}4)$ and $4q\nmid s$ if $q\equiv 3\,({\rm mod~}4)$ or $q=2$.
 For every $0<\delta <H_q^{-1}(\frac{s-1}{s^2}),$ there is a sequence of metacyclic codes
 over $\F_q$ that are group codes for $G(m,s,r)$ of rate
$1/s$ and
 relative Hamming distance $\delta.$
\end{thm}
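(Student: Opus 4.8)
The plan is to combine the three preceding lemmas with an expurgated random coding argument, using Conjecture \ref{generalArtin} to guarantee that there are infinitely many primes $m$ realizing the required arithmetic conditions. First I would set up the choice of primes: I need $m$ prime with $m \equiv 1 \pmod{s}$, with $q$ a primitive root modulo $m$, and with an element $r$ of order exactly $s$ modulo $m$ (the latter exists automatically once $m \equiv 1 \pmod s$, since $(\mathbb{Z}/m\mathbb{Z})^\times$ is cyclic of order $m-1$). Applying Conjecture \ref{generalArtin} with $a = q$: since $q$ is prime it is not $-1$ and not a perfect square (so $h = 1$ and the coprimality $(s,h)=1$ is automatic), and the hypotheses $q \nmid s$ when $q \equiv 1 \pmod 4$ and $4q \nmid s$ when $q \equiv 3 \pmod 4$ or $q = 2$ are precisely the translation of $\Delta \nmid s$, where $\Delta$ is the discriminant of $\mathbb{Q}(\sqrt{q})$ (namely $\Delta = q$ if $q \equiv 1 \pmod 4$ and $\Delta = 4q$ otherwise). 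Hence $S(q,s)$ is infinite, giving an infinite sequence of admissible primes $m \to \infty$.

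Next I would run the counting. For each such $m$, by Lemma \ref{ideal} the codes $T_{a_1,\ldots,a_{s-1}}$ satisfying the stated relations are metacyclic $G(m,s,r)$-codes; each has length $ms$, dimension $m$, hence rate $1/s$. By Lemma \ref{count} there are $\Omega_{m,s} = s' \cdot \frac{q^{m-1}-1}{q^{(m-1)/s}-1}$ of them, which grows like $q^{(m-1)(s-1)/s}$ up to subexponential factors. By Lemma \ref{cover}, each nonzero vector of Hamming weight $< m$ lies in at most $q$ of these codes. The key estimate: the number of $(f_1,\ldots,f_s) \in R_m^s = \mathbb{F}_q^{ms}$ of weight at most $\delta m s$ is, by the entropy estimate \cite[Lemma 2.10.3]{HP}, at most $q^{ms H_q(\delta)}$ up to subexponential terms; all such vectors automatically have weight $< m$ once $\delta m s < m$, but more carefully I should observe that weight $< m$ in the sense of Lemma \ref{cover} means weight less than $m$ as a vector in $\mathbb{F}_q^{ms}$ — so I must check $\delta m s < m$, i.e. $\delta < 1/s$, which holds since $H_q^{-1}(\tfrac{s-1}{s^2}) \le \tfrac{s-1}{s^2} \cdot \tfrac{q}{q-1}$... actually the cleaner route is just to note $\delta < H_q^{-1}(\frac{s-1}{s^2}) \le \frac{q-1}{q} < 1$ and handle the weight-$<m$ hypothesis of Lemma \ref{cover} by a direct comparison, since any vector of weight $\ge m$ but of relative weight $\delta < 1/s$ would still need separate treatment; I would instead apply Lemma \ref{cover} only after noting that a minimum-weight codeword of relative weight $\delta$ has absolute weight $\delta m s$, and for the expurgation to work I only need to discard codes containing some vector of absolute weight $\le \delta m s$, so I require those vectors to have absolute weight $< m$, forcing the working hypothesis $\delta s < 1$.

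Then comes the expurgation: the number of codes in my family that contain at least one nonzero vector of weight $\le \delta m s$ is at most $q \cdot q^{ms H_q(\delta)}$ up to subexponential factors. I want this to be strictly less than $\Omega_{m,s} \asymp q^{(m-1)(s-1)/s}$ for $m$ large, which (taking logs base $q$ and dividing by $ms$) reduces to $H_q(\delta) < \frac{s-1}{s^2}$, i.e. $\delta < H_q^{-1}(\frac{s-1}{s^2})$ — exactly the hypothesis. For every sufficiently large admissible $m$ there therefore remains at least one metacyclic $G(m,s,r)$-code of length $ms$, rate $1/s$, and minimum distance $> \delta m s$; passing to this subsequence of $m$'s gives a sequence of metacyclic codes of rate $1/s$ and relative distance $\ge \delta$. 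A minor point I would address: $\delta$ is required to be achieved as a $\liminf$ of $d_n/n$, so I should note that one can in fact attain relative distance exactly $\delta$ (or any value in $(0, H_q^{-1}(\frac{s-1}{s^2}))$), either by the above or by a routine interpolation. The main obstacle is not conceptual but bookkeeping: reconciling the "$<m$" hypothesis of Lemma \ref{cover} (an absolute-weight condition on vectors in $\mathbb{F}_q^{ms}$) with the relative-distance target $\delta$, and making sure the subexponential error terms in both the entropy volume estimate and in $\Omega_{m,s}$ are genuinely negligible against the exponential gap $q^{ms(\frac{s-1}{s^2} - H_q(\delta))}$; I expect this to go through cleanly precisely because the hypothesis on $\delta$ is strict.
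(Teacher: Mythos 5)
Your argument follows the paper's proof essentially verbatim: Conjecture \ref{generalArtin} supplies infinitely many admissible primes $m$, and the expurgation combining Lemmas \ref{count} and \ref{cover} with the entropy volume bound reduces to the inequality $H_q(\delta) < (s-1)/s^2$. The weight-$<m$ subtlety you flag (which the paper silently glosses over) is genuine but resolves automatically: by concavity $H_q(t) \ge \frac{q}{q-1}\,t$ on $\bigl[0,\frac{q-1}{q}\bigr]$, hence $H_q^{-1}\bigl(\frac{s-1}{s^2}\bigr) \le \frac{q-1}{q}\cdot\frac{s-1}{s^2} < \frac{1}{s}$, so every vector of weight at most $\delta ms$ indeed has weight less than $m$ and Lemma \ref{cover} applies as needed.
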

 \begin{proof}
 Under these hypotheses on $q$ and $s$, the existence of infinitely many primes $m$ such that $q$ is primitive modulo $m$
 and that $s$ divides $m-1$ is ensured by Artin's primitive root conjecture in arithmetic progression, as shown in \S\ref{Artin} (for the discriminant
 of quadratic fields see \cite[p.89]{Sam}).
  If the number $\Omega_{m,s}$ is strictly larger than $q$ times the size of a Hamming ball of radius $\lfloor \delta ms \rfloor,$ then, by Lemma \ref{cover},
  there is a code constructed by Lemma \ref{ideal} of minimum distance $>\lfloor \delta ms \rfloor.$
  This inequality will hold if, using the standard entropic estimates of \S \ref{asymp}, for $m \to \infty,$ we have
  $$ s'\cdot \frac{q^{m-1}-1}{q^{\frac{m-1}{s}}-1}>q\cdot q^{msH_q(\delta)},$$ and in particular if $(s-1)/s^2>H_q(\delta).$
 \end{proof}

 We relax the condition that $q$ is prime as follows.

 \begin{coro} 
 \label{asym} 
 Assume Conjecture \ref{generalArtin} holds true. 
 The metacyclic codes over $\F_w$ 
 with $w=q^a,$ $q$ a prime and $a\ge 2,$ form an asymptotically good family of codes.
\end{coro}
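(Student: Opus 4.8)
The plan is to reduce the claim to Theorem \ref{main} applied to the prime field $\F_q\subseteq\F_w$, together with a scalar-extension argument of the kind already used in \cite[Proposition 12]{FW}.

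\smallskip

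\noindent\emph{Step 1: a good family over $\F_q$.} Fix the prime $q$ with $w=q^a$. For concreteness take $s=2$: then $q\nmid 2$ whenever $q$ is odd (in particular when $q\equiv 1\,({\rm mod~}4)$), while $4q\nmid 2$ always (covering $q\equiv 3\,({\rm mod~}4)$ and $q=2$), so the hypotheses of Theorem \ref{main} hold. Since $(s-1)/s^2=1/4$ and $H_q$ is continuous and strictly increasing on $[0,(q-1)/q]$ with $H_q(0)=0$, we may fix $\delta_0$ with $0<\delta_0<H_q^{-1}(1/4)$. Theorem \ref{main} then yields a sequence of metacyclic codes $\calC_i$ over $\F_q$, each a $G(m_i,2,r_i)$-code of rate $1/2$ and relative distance $\delta_0$; this family is already asymptotically good over $\F_q$.

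\smallskip

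\noindent\emph{Step 2: lifting to $\F_w$.} Using the $\F_w$-algebra isomorphism $\F_wG\simeq\F_w\otimes_{\F_q}\F_qG$, a left ideal $\calC\le\F_qG$ produces the left ideal $\widetilde\calC:=\F_w\otimes_{\F_q}\calC$ of $\F_wG$ (it is closed under left multiplication by $\F_wG$ precisely because $\calC$ is closed under left multiplication by $\F_qG$), hence a metacyclic code over $\F_w$ for the \emph{same} group $G(m,s,r)$. Fixing the same ordering of $G$, the length $|G|=ms$ is unchanged, and an $\F_q$-basis of $\calC$ is an $\F_w$-basis of $\widetilde\calC$, so the dimension, and with it the rate $1/s$, is unchanged. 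For the minimum distance I would argue that a generator matrix $G_0\in\F_q^{k\times n}$ of $\calC$ also generates $\widetilde\calC$ over $\F_w$, and that for $u\in\F_w^k$, writing $u=\sum_{j=1}^{a}u_j\beta_j$ with $u_j\in\F_q^k$ and $\{\beta_j\}$ an $\F_q$-basis of $\F_w$, one has $(uG_0)_\ell=0$ if and only if $(u_jG_0)_\ell=0$ for every $j$; hence the support of $uG_0$ equals $\bigcup_j\operatorname{supp}(u_jG_0)$, so $\operatorname{wt}(uG_0)\ge\max_j\operatorname{wt}(u_jG_0)\ge d(\calC)$ whenever $u\neq 0$. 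Together with $d(\widetilde\calC)\le d(\calC)$ (as $\calC\subseteq\widetilde\calC$), this gives $d(\widetilde\calC)=d(\calC)$, so the relative distance is preserved as well.

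\smallskip

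\noindent\emph{Step 3: conclusion.} Applying Step 2 to the codes of Step 1 produces metacyclic codes $\widetilde\calC_i$ over $\F_w$, each a $G(m_i,2,r_i)$-code of rate $1/2$ and relative distance $\delta_0>0$, so $R\delta=\delta_0/2\neq 0$ and the family is asymptotically good. I expect the only genuine point to be the invariance of the minimum distance under the field extension in Step 2; the rest is bookkeeping, and that point is precisely the one isolated in \cite[Proposition 12]{FW}, which could alternatively just be cited.
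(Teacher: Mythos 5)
Your proof is correct and takes the same route as the paper: apply Theorem \ref{main} to obtain an asymptotically good family of metacyclic codes over the prime field $\F_q$, then extend scalars to $\F_w$. The paper merely cites \cite[Proposition 12]{FW} for the fact that extension of scalars preserves the parameters, whereas you spell out the argument (preservation of rate and minimum distance, closure of $\F_w\otimes_{\F_q}\calC$ under left multiplication); this is a fuller presentation of the same idea, not a different one.
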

 \begin{proof}
  By the preceding theorem the metacyclic codes over $\F_q$ are asymptotically good. By extension of scalars from $\F_q$ to $\F_w$ (as in \cite[Proposition
12]{FW}) the result
  follows.
 \end{proof}

 The following result was proved for $q$ even by similar techniques, under Artin's conjecture, in \cite{A+}, and unconditionally in \cite{BM} using more advanced  probabilistic techniques.

 \begin{coro} 
 Assume Conjecture \ref{generalArtin} holds true. 
 Dihedral codes are asymptotically good in any characteristic.
 \end{coro}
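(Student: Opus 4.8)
The plan is to deduce the dihedral case as a specialization of Corollary \ref{asym} together with Theorem \ref{main}, taking $s=2$ and $r=m-1$, so that $G(m,2,m-1)=D_m$. First I would check that the hypotheses of Theorem \ref{main} are harmless in this case: with $s=2$ the order of $r=m-1$ modulo a prime $m>2$ is exactly $2$, and $s\mid m-1$ holds automatically once $m$ is odd, so the only arithmetic input is the existence of infinitely many primes $m\equiv 1\,({\rm mod~}2)$ (i.e. odd primes) for which $q$ is a primitive root. For $q$ odd one applies Theorem \ref{main} directly with $s=2$, checking the congruence conditions $q\nmid 2$ (always true for $q$ odd) and, when $q\equiv 3\,({\rm mod~}4)$, $4q\nmid 2$ (trivially true); this yields dihedral codes over $\F_q$ of rate $1/2$ and relative distance up to $H_q^{-1}(1/4)$, hence an asymptotically good family.

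Next I would handle the remaining characteristics. For $q$ an odd prime the previous paragraph already gives the result; for $q=2$ the condition in Theorem \ref{main} is $4q\nmid s$, i.e. $8\nmid 2$, which holds, so binary dihedral codes are asymptotically good directly from Theorem \ref{main} with $s=2$ (recovering \cite{A+,BM} under Conjecture \ref{generalArtin}). For a general prime power $w=q^a$ with $a\ge 2$, I would invoke Corollary \ref{asym}: the dihedral codes are precisely the metacyclic codes for $G(m,2,m-1)$, and extension of scalars from $\F_q$ to $\F_w$ as in \cite[Proposition 12]{FW} preserves asymptotic goodness, so dihedral codes over $\F_w$ are asymptotically good as well. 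Assembling these three cases --- $q=2$, $q$ an odd prime, and $q$ a proper prime power --- covers every finite field, which is exactly the claim ``in any characteristic''.

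I do not expect a genuine obstacle here, since all the work has been done in Theorem \ref{main} and Corollary \ref{asym}; the only point requiring a line of care is making sure the side conditions on $q$ in Theorem \ref{main} are actually satisfied for $s=2$ in every characteristic, and that the degenerate arithmetic-progression condition ``$m\equiv 1\,({\rm mod~}2)$'' imposes nothing beyond $m$ odd, so that the Artin-type input is just the classical Artin primitive root conjecture for $q$. One should also note explicitly that $r=m-1\neq 1\,({\rm mod~}m)$ for $m>2$, so the group $G(m,2,m-1)$ is genuinely dihedral (non-abelian) and Remark \ref{rmk-abelian} confirms the constructed codes are not forced to be abelian. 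With these remarks in place the corollary follows immediately.
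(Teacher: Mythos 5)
Your proposal is correct and follows the same route as the paper: specialize Theorem \ref{main} and Corollary \ref{asym} to $s=2$, $r=m-1$. The paper's own proof is just this one-line observation, while you helpfully spell out that the side conditions on $q$ in Theorem \ref{main} hold vacuously for $s=2$ in every characteristic and that the progression condition degenerates to the classical Artin conjecture.
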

\begin{proof}
  A consequence of Theorem \ref{main} and Corollary \ref{asym} in the case $r=m-1$ and $s=2$, for which $G(m,s,r)$ is the dihedral group of order $2m.$
 \end{proof}

 The following result was proved unconditionally and for all characteristics in \cite{BW} using the Bazzi-Mitter approach of \cite{BM}.

 \begin{coro} 
 Assume Conjecture \ref{generalArtin} holds true. 
 If $p\equiv 3\,({\rm mod~}4)$, then $G(m,p,r)$-codes over finite fields of characteristic $p$ are asymptotically good.
 \end{coro}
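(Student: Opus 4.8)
The plan is to obtain this statement as the special case $q=s=p$ of Theorem \ref{main}, followed by the extension-of-scalars argument already used for Corollary \ref{asym}.

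First I would handle the prime field $\F_p$ by applying Theorem \ref{main} with $q=p$ and $s=p$. The only thing to check is the arithmetic condition on the pair $(q,s)$, and this is exactly where the hypothesis $p\equiv 3\pmod{4}$ is needed: since $q=s$ we have $q\mid s$, so the clause ``$q\nmid s$ if $q\equiv 1\pmod{4}$'' can never be met, whereas for $p\equiv 3\pmod{4}$ the governing clause is instead ``$4q\nmid s$'', i.e.\ $4p\nmid p$, which holds trivially since $4p>p$. Phrased through Conjecture \ref{generalArtin} with $a=p$: the largest $h$ with $p$ an $h$-th power is $h=1$, so $(s,h)=1$, and the discriminant of $\mathbb{Q}(\sqrt{p})$ equals $4p$ when $p\equiv 3\pmod{4}$, so $\Delta\nmid s$. (When $p\equiv 1\pmod{4}$ one instead has $\Delta=p\mid s$, making $S(p,p)$ finite; this is the genuine reason for the congruence restriction.) Granting the hypothesis, Theorem \ref{main} produces, for any $\delta$ with $0<\delta<H_p^{-1}\!\left(\frac{p-1}{p^2}\right)$, a sequence of $G(m,p,r)$-codes over $\F_p$ of rate $1/p$ and relative distance $\delta$; since $\frac{p-1}{p^2}>0$ and $H_p$ is continuous and strictly increasing with $H_p(0)=0$, this interval is nonempty, and fixing such a $\delta$ gives $R\delta=\delta/p>0$, so the family is asymptotically good over $\F_p$.

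Next I would remove the restriction to the prime field. For an arbitrary finite field $\F_w$ of characteristic $p$ write $w=p^a$: the case $a=1$ is the previous paragraph, and for $a\ge 2$ I would extend scalars from $\F_p$ to $\F_w$ exactly as in the proof of Corollary \ref{asym}, that is, along \cite[Proposition 12]{FW}. One only needs that extension of scalars sends a $G(m,p,r)$-code over $\F_p$ to a $G(m,p,r)$-code over $\F_w$, preserves length and dimension (hence the rate $1/p$), and does not decrease the minimum distance (hence the relative distance stays $\ge\delta$). This makes the resulting family of $G(m,p,r)$-codes over $\F_w$ asymptotically good as well, completing the argument.

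The main obstacle is conceptual rather than computational: one must recognize that, with the forced choice $q=s=p$, the hypotheses of Theorem \ref{main} hold \emph{precisely} when $p\equiv 3\pmod{4}$, because that is exactly when the discriminant of $\mathbb{Q}(\sqrt{p})$ is $4p$ rather than $p$, so that Conjecture \ref{generalArtin} supplies the infinitely many primes $m\equiv 1\pmod{p}$ with $p$ primitive modulo $m$ that the construction requires. Everything else is a direct specialization of Theorem \ref{main} and Corollary \ref{asym}.
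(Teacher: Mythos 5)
Your proposal is correct and matches the paper's proof exactly: the paper likewise derives this corollary as a specialization of Theorem \ref{main} with $s=q=p$, invoking Corollary \ref{asym} for the extension to prime-power fields. You simply spell out the arithmetic that the paper leaves implicit, namely that for $q=s=p$ the hypothesis of Theorem \ref{main} reduces to $4p\nmid p$ (automatic) precisely because the discriminant of $\mathbb{Q}(\sqrt{p})$ is $4p$ when $p\equiv 3\pmod 4$ while it equals $p$ (hence divides $s=p$, killing the construction) when $p\equiv 1\pmod 4$.
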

 \begin{proof}
  This is a consequence of Theorem \ref{main} with $s=q=p$ 
  and of Corollary \ref{asym}
  for prime powers.
 \end{proof}
\section{Conclusion and Open Problem}\label{conclusion}
In this note, we have shown that left ideals in the group ring of a metacyclic group form, for certain values of the parameters, an asymptotically good family of codes.
The main open problem would be to extend this result to two-sided ideals. This would
allow to show, by the combinatorial equivalence derived in \cite{SL}, that abelian group codes are asymptotically good. Unfortunately, the conditions obtained in Remark \ref{rmk-abelian} seem to suggest that $s$-circulant metacyclic codes are not the right ones to be considered in this case.

\end{document}